\theoremstyle{remark}
\newtheorem{theorem}{ {Theorem}}
\newtheorem{definition}{{Definition}}
\let\olddefinition\definition
\renewcommand{\definition}{\olddefinition\normalfont}
\let\oldtheorem\theorem
\renewcommand{\theorem}{\oldtheorem\normalfont}
\let\oldremark\remark
\renewcommand{\remark}{\oldremark\normalfont}
\pgfplotsset{compat=newest}
\pgfplotsset{compat=1.10}
\newcommand{\ti}{\textit}
\definecolor{OliveGreen}{rgb}{0,0.6,0}
\newcommand{\alert}[1]{\textcolor{black}{#1}}
\newcommand{\Alert}[1]{\textcolor{black}{#1}}
\begin{document}

\title{Delivery Time Minimization in Cache-Assisted Broadcast-Relay Wireless Networks with Imperfect CSI}

\author{\IEEEauthorblockN{Jaber Kakar$^{*}$, Anas Chaaban$^{\dagger}$, Aydin Sezgin$^{*}$ and Arogyaswami Paulraj$^{\ddagger}$}
\IEEEauthorblockA{$^{*}$Institute of Digital Communication Systems,
Ruhr-Universit{\"a}t Bochum, Germany\\$^{\dagger}$School of Engineering, University of British Columbia, Kelowna, Canada\\$^{\ddagger}$Information Systems Laboratory, Stanford University, CA, USA\\
Email: \{jaber.kakar, aydin.sezgin\}@rub.de, anas.chaaban@ubc.ca, apaulraj@stanford.edu
}}

\maketitle

\begin{abstract}
An emerging trend of next generation communication systems is to provide network edges with additional capabilities such as storage resources in the form of caches to reduce file delivery latency. To investigate the impact of this technique on latency, we study the delivery time of a cache-aided broadcast-relay wireless network consisting of one central base station, $M$ cache-equipped transceivers and $K$ receivers under finite precision channel state information (CSI). We use the normalized delivery time (NDT) to capture the worst-case per-bit latency in a file delivery. Lower and upper bounds on the NDT are derived to understand the influence of $K,M$, cache capacity and channel quality on the NDT. \alert{In particular, regimes of NDT-optimality are identified and discussed.}                
\end{abstract}


\IEEEpeerreviewmaketitle

\section{Introduction}
\label{sec:intro}

Wireless traffic is drastically increasing, particularly due to on-demand video streaming. A promising solution to tackle this problem is \emph{caching}, i.e., storing popular files (video e.g.) in mobile users' local caches and/or \emph{edge nodes} (e.g, base stations (BS) or relays) disseminated in the network coverage area. The local availability of requested user content in the caches, also referred to as \emph{cache hits}, results in reduced backhaul traffic and low file \emph{delivery time}\footnote{In this context, delivery time refers to the timing overhead required to satisfy all file demands of requesting nodes in the network.}.     

In this paper, we consider a broadcast-relay channel (BRC) with cache-assisted relay nodes (RN). As illustrated in Fig. \ref{fig:HetNet}, the network consists of $K$ mobile users (UE$_1$ through UE$_K$), $M$ RNs (RN$_1$ through RN$_M$) and a BS. With the exception of the BS, all remaining nodes request files that ideally ought to be delivered under the \emph{lowest delivery time} possible. To this end, the schemes for RN \emph{cache prefetching/placement} and BS-RN file \emph{delivery} have to be designed and optimized \emph{jointly}. Such joint process involves an optimal balance in delivery times with respect to the file delivery (i) to the RNs from the BS through the broadcast channel (BC) and (ii) to the UEs through the BS-RN interference channel (IC). Regarding (i) and (ii), the RN cache prefetching has to be chosen to facilitate multicasting opportunities 
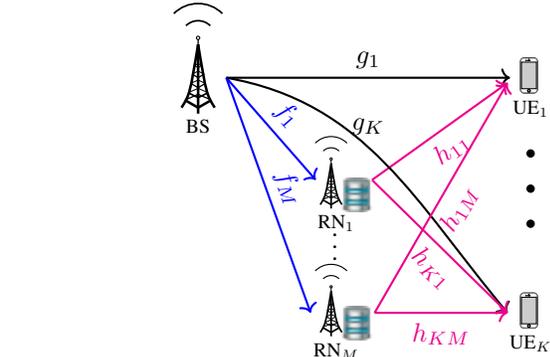
\begin{figure}[h]
	\begin{center}
		\begin{tikzpicture}[scale=1]
		\SymModChanged
		\end{tikzpicture}
		\vspace{-3em}
		\caption{\small{A transceiver cache-aided BRC consisting of one BS, $M$ RNs and $K$ UEs. \Alert{These nodes are connected through the wireless links $f_i,g_k$ and $h_{ij}$, $i=1,\ldots,M$, $j=1,\ldots,K$. Each RN is equipped with a finite size cache.}}}	
		\label{fig:HetNet}
	\end{center}
\end{figure} on the BC with respect to files requested by the RNs, in addition to interference coordination techniques (e.g., interference alignment, zero-forcing) through BS-RN cooperation on 
the IC with the BS and RNs as transmitters and the UEs as receivers. In other words, the RNs caches have a dual purpose, i.e., they represent transmitter and receiver caches with respect to the files UEs and RNs request. Thus, in short, we call this network a \emph{transceiver} cache-aided BRC. Such type of network is of importance from an \emph{online cache update} perspective in which RNs refresh their cached contents while simultaneously satisfying the UEs file demands in collaboration with the BS.   

In the existing literature, the effect of caching on the delivery time has predominantly been studied for \emph{interference-limited} networks. \alert{In particular, the study of an error-free BC with $K$ single-antenna receivers endowed with caches showed that the delivery time scales} as $\frac{K(1-\mu)}{1+\mu K}$, where $\mu$ denotes the per-user cache memory size normalized by the entire set of files \cite{Maddah-Ali2}. \alert{Their} work reveals that in addition to the local caching gain of $(1-\mu)$ 
resulting from the availability of some content in receivers' caches, an extra global caching gain of $\frac{1}{1+\mu K}$ is also attained. The global caching gain originates from multicasting opportunities in the delivery phase that emerge from an appropriate choice in the cache placement. 
More recently, various related settings of \cite{Maddah-Ali2} have been studied. This includes, amongst other, device-to-device caching in D2D networks \cite{Ji16}, IC with either transmitter caches only (with/without cloud processing) \cite{Maddah_Ali,KakarICC} or with caches at both transmitter and receiver under one-shot linear delivery schemes \cite{Naderializadeh}. 
The first delivery time characterizations of transceiver-cache aided BRCs for special cases of $K$ and $M$ for equally strong wireless links when $K+M\leq 4$ and non-equally strong wireless links when $(K,M)=(1,1)$ are established in \cite{conference214,KakarICC2018} \alert{under perfect-quality CSI}. 

In the following sections to come, we first present the system model followed by a general information-theoretic lower and upper bound on the NDT for any number of RNs and UEs. The upper bound integrates multicasting and distributed zero-forcing schemes on BC and IC of the BRC. Through comparison of these two bounds, we identify and discuss regions of NDT-optimality in terms of $K,M,\alpha$ and the fractional cache size $\mu$. 

\textbf{Notation:} 
For any two integers $a$ and $b$ with $a\leq b$, we define $[a:b]\triangleq\{a,a+1,\ldots,b\}$ and we denote $[1,b]$ simply as $[b]$. Further, $\boldsymbol{1}_{A}(w)$ denotes the indicator function of $A$ and equals $1$ if $w\in A$ and $0$ otherwise. 

\section{System Model} 
Now, we briefly outline the system setup including the performance metric normalized delivery time (NDT). \alert{In the BRC of Fig. \ref{fig:HetNet}, $M$ RNs and $K$ UEs request arbitrary files, each file of length $L$ bits, from the set of $N$ popular files $\mathcal{W}=\{W_1,W_2,\ldots,W_N\}$. While the BS has access to the entire file library $\mathcal{W}$, the RNs are able to prefetch only $\mu N L$ bits from $\mathcal{W}$ \emph{before} the file delivery unaware of the actual files being requested. The parameter $\mu$ is commonly referred to as \emph{fractional cache size}. It denotes how much content can be stored at each RN relative to the size of the entire library $\mathcal{W}$. Thus, it ranges from $\mu\in[0,1]$. The prefetching schemes are restricted to arbitrary uncoded symmetric caching strategies in which at most $\mu L$ bits of each file $W_n,n=1,2,\ldots,N$, are cached.} The file that is requested by the $i$-th node is denoted by $W_{d_i}\in\mathcal{W}$. Hereby, $d_i$ represents the demand index of the $i$-th node\footnote{By convention, the first $K$ nodes denote UEs whereas the remaining $M$ nodes the RNs.}. Concatenating the demand indices of UEs and RNs gives the demand vector $\boldsymbol{d}=(d_1,d_2,\ldots,d_{K+M})$. This vector is shared among all nodes prior to the file delivery. 
\begin{figure*}
	\centering
	\begin{subfigure}[b]{0.475\textwidth}
		\centering
		\vspace{8.5em}
		\begin{tikzpicture}[scale=1]
		\SymModSchemePhaseOne
		\end{tikzpicture}
		\vspace{-10em}
		\caption{\small Scheme for the first phase ($t\in[T_1]$)} 
		\label{fig:OS_Phase1}
	\end{subfigure}
	\hfill
	\begin{subfigure}[b]{0.475\textwidth}  
		\centering 
		\vspace{8.5em}
		\begin{tikzpicture}[scale=1]
		\SymModSchemePhaseTwo
		\end{tikzpicture}
		\vspace{-10em}
		\caption{\small Scheme for the second phase ($t\in[T_1+1:T_1+T_2]$)}  
		\label{fig:OS_Phase2}       
	\end{subfigure}
	\caption
	{\small Illustration of the proposed one-shot scheme with $M=4$ RNs, $K=2$ UEs and $\mu M=2$ for the worst-case demand scenario. On the one hand, in each channel use of the first phase [cf. (a)], the MAN scheme is used on the SISO BS-RN broadcast channel to convey desired symbols of (any combination of) $1+\mu M$ RNs. In the worst-case scenario, where UEs request other files, these symbols represent interference which are \emph{partially} zero-forced through cooperative BS-RN beamforming at (any combination) of $\min\{K,\mu M\}$ UEs. Simultaneously, the scheme exploits RN caches by providing the same UEs with \alert{$\alpha\log(P)$ bits of the desired file}. After $T_1$ channel uses of the first phase, the demand of the RNs is satisfied. On the other hand, the second phase [cf. (b)] is devoted to communicate, if necessary, the remaining file symbols of the UEs by applying cooperative BS-RN zero-forcing beamforming.} 
	\label{fig:OS_Scheme}
\end{figure*}
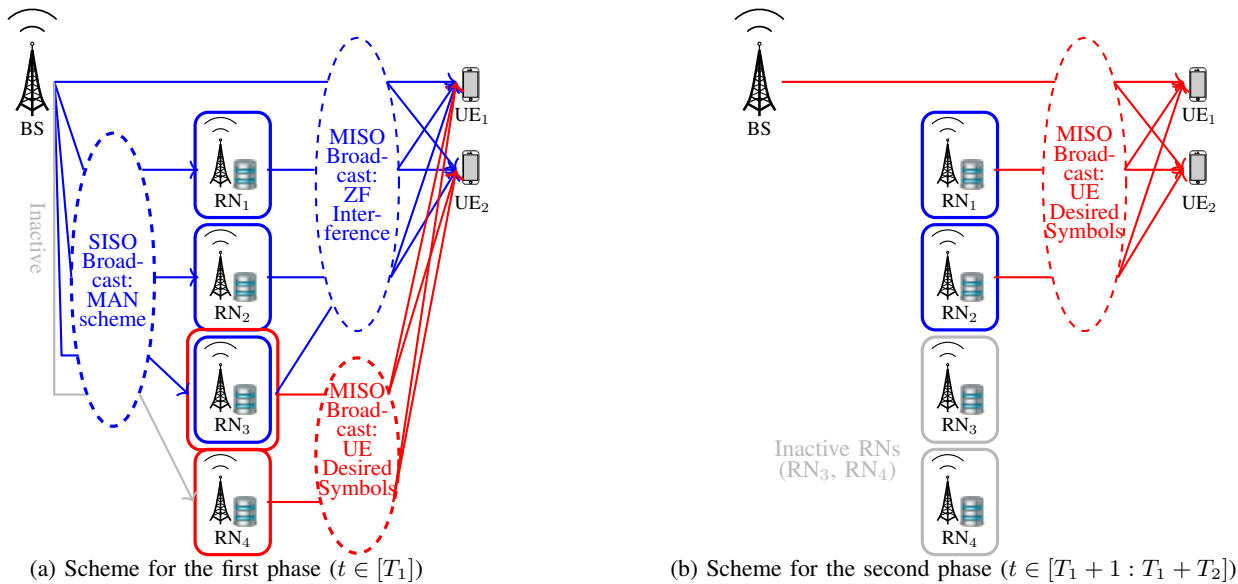 

For $T$ channel uses, the input-output equations of the BRC are given by 
\begin{align}
	\label{eq:rx_sig_UE}
	\boldsymbol{y}_{k}^{(u)}&=g_{k}\boldsymbol{x}^{(s)}+\sum_{m=1}^{M}h_{km}\boldsymbol{x}_{m}^{(r)}+\boldsymbol{z}_{k}^{(u)},\\
	\label{eq:rx_sig_RN}
	\boldsymbol{y}_{m}^{(r)}&=f_{m}\boldsymbol{x}^{(s)}+\boldsymbol{z}_{m}^{(r)},
\end{align} 
for UE$_k$ and RN$_m$, respectively. $\boldsymbol{x}^{(s)},\boldsymbol{x}_{m}^{(r)}\in\mathbb{C}^{T}$ are the transmit signals at the BS and the $m$-th RN, respectively, satisfying average power constraints $\mathbb{E}[|\boldsymbol{x}_{i}^{(s)}|^{2}]\leq P$ and $\mathbb{E}[|\boldsymbol{x}_{m,i}^{(r)}|^{2}]\leq P$, where $\boldsymbol{x}_{i}^{(s)}$ and $\boldsymbol{x}_{m,i}^{(r)}$ are the $i$-th components of $\boldsymbol{x}^{(s)}$ and $\boldsymbol{x}_{m}^{(r)}$. The vectors $\boldsymbol{z}_{k}^{(u)}$ and $\boldsymbol{z}_{m}^{(r)}$ are additive Gaussian noise terms at UE$_k$ and RN$_m$ consisting of i.i.d. components of zero mean and unit variance (denoted $\mathcal{CN}(0,1)$). All $M$ RNs are assumed to be causal and full duplex. Further, in \eqref{eq:rx_sig_UE}-\eqref{eq:rx_sig_RN}, $f_m$ and $g_k$ represent the complex channel coefficients from BS to RN$_m$ and UE$_k$, respectively, while $h_{km}$ is the channel from RN$_m$ to UE$_k$. For notational simplicity, we summarize the channel state information (CSI) by the channel vectors $\boldsymbol{f}=\{f_m\}_{m=1}^{m=M}$, $\boldsymbol{g}=\{g_k\}_{k=1}^{k=K}$ and the matrix $\boldsymbol{H}=\{h_{km}\}_{k=1,m=1}^{k=K,m=M}$. We denote the number of channel uses required to satsify all file demands by $T$. This time is the delivery time which depends on the demand vector $\boldsymbol{d}$ and the \emph{channel estimates} of $\boldsymbol{f},\boldsymbol{g}$ and $\boldsymbol{H}$, i.e., $T=T(\boldsymbol{d},\boldsymbol{\hat{f}},\boldsymbol{\hat{g}},\boldsymbol{\hat{H}})$\footnote{We shall occasionally avoid indexing the functional dependency for notational simplicity.}. \alert{The channel estimates are of relevance because we assume that both the BS and all RNs have only access to imperfect CSI.} In detail, the BS, knows only the imperfect estimates
$\boldsymbol{\hat{f}},\boldsymbol{\hat{g}}$ and $\boldsymbol{\hat{H}}$, while each RN is aware of $\boldsymbol{\hat{g}},\boldsymbol{\hat{H}}$ and $\boldsymbol{f}$\footnote{This assumption is in agreement with the widely used imperfect and perfect CSI setting at transmitting and receiving nodes, respectively. \alert{Thus, the UEs have perfect-quality CSI.}}.
Each channel vector and matrix entry can be written as
$f_{m}=\hat{f}_{m}+\tilde{f}_m,$ $g_{k}=\hat{g}_{k}+\tilde{g}_k$ and $h_{km}=\hat{h}_{km}+\tilde{h}_{km},$ $\forall m,k$. The estimation errors of each channel are assumed to be of same quality in MSE-sense, i.e., $\mathbb{E}[|\tilde{f}_m|^{2}]=\mathbb{E}[|\tilde{g}_k|^{2}]=\mathbb{E}[|\tilde{h}_{km}|^{2}]=\sigma^{2}(P)<1$. We define the \emph{CSI quality parameter} $\alpha\in[0,1]$ as the power exponent of the estimation error in the high SNR regime as (cf. \cite{KakarMDPI})
\begin{equation}
	\alpha=\lim\limits_{P\rightarrow\infty}-\frac{\log\sigma^{2}}{\log P}.
\end{equation} We observe that $\sigma^{2}$ scales with $P^{-\alpha}$, i.e., $\sigma^{2}\doteq P^{-\alpha}$, where $\doteq$ denotes the exponential equality\footnote{We use this equality in the form $f(P)\doteq P^{c}$ to denote $\lim\limits_{P\rightarrow\infty}\frac{\log f(P)}{\log P}=c$.}. The extreme cases of $\alpha=0$ and $\alpha=1$ represent the channel settings of no CSI and (quasi) perfect CSI, respectively. Now we are ready to define the delivery time per bit and its normalized version.   
\begin{definition}(Delivery time per bit \cite{avik}) 
	The delivery time per bit (DTB) is defined as
		\begin{equation}\label{eq:DTB}
		\Delta(\mu,\sigma^{2},P)=\max_{\boldsymbol{d}\in [N]^{K+M}}\limsup_{L\rightarrow\infty}\frac{\mathbb{E}[T(\boldsymbol{d},\boldsymbol{\hat{f}},\boldsymbol{\hat{g}},\boldsymbol{\hat{H}})]}{L},
		\end{equation} 
		where the expectation is over the channel realizations.
\end{definition} The normalization of the expected delivery time by the file size $L$ gives insight about the per-bit delivery time. In this context, the DTB
measures the time needed per-bit when transmitting the requested
files through the wireless channel for the worst-case request pattern of RNs and UEs as $L\rightarrow\infty$. The ratio of two DTBs -- the DTB of the network under study over the DTB of an interference-free system (e.g., Gaussian point-to-point channel) given by $1/\log(P)$ -- in the high SNR-regime helps us define the NDT.
\begin{definition} (Normalized delivery time \cite{avik}) 
	The NDT is defined as 
	\begin{equation}\label{eq:NDT}
	\delta(\mu,\alpha)=\lim_{P\rightarrow\infty}\frac{\Delta(\mu,\sigma^{2},P)}{1/\log(P)}.
	\end{equation} 
	We denote the minimum NDT by $\delta^{\star}(\mu,\alpha)$.
\end{definition} In short, the NDT $\delta(\mu,\alpha)$ indicates that the worst-case delivery time for one bit of the cache-aided network at fractional cache size $\mu$ and channel quality parameter $\alpha$ is $\delta(\mu,\alpha)$ times larger than the time needed by the reference system.      

\section{General Bounds on the Minimum NDT}

In this section, we provide lower and upper bounds on the NDT for a general BRC that consists of a single BS, $M$ RNs and $K$ UEs. 

\subsection{Lower Bound on the NDT}

The following theorem presents an information-theoretic lower bound on the NDT. 

\begin{theorem}(Lower bound on NDT)\label{theorem_lower_bound}
	For the transceiver cache-aided network with one BS, $M$ RNs each endowed with a cache of fractional cache size $\mu\in[0,1]$, $K$ UEs and a file library of $N\geq K+M$ files, the optimal NDT is lower bounded under \emph{perfect CSI} ($\alpha=1$) at all nodes by
	\begin{align}\label{eq:NDT_lw_bound}
	\delta^{\star}&(\mu,1)\geq\max\Big\{1,\max_{\substack{\ell\in[\bar{s}:M],\\s\in[\min\{M+1,K\}]}}\delta_{\text{LB}}(\mu,\ell,s)\Big\},
	\end{align} where $\bar{s}=M+1-s$ and 
	\begin{align}\label{eq:NDT_lw_bound_inner_comp}
	&\hspace{-.25cm}\delta_{\text{LB}}(\mu,\ell,s)=\frac{K+\ell-\mu(\bar{s}\big(K-s+\frac{(\bar{s}-1)}{2}\big)+\frac{\ell}{2}(\ell+1))}{s}.
	\end{align}
\end{theorem}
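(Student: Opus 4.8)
The plan is to prove the two arguments of the outer maximum separately and then combine them. The bound $\delta^\star(\mu,1)\ge 1$ is immediate: the UEs have no cache, so for the worst‑case demand -- which can be taken to have $K+M$ pairwise distinct entries since $N\ge K+M$, making the requested files mutually independent and uniform -- UE$_1$ must recover its full $L$‑bit file $W_{d_1}$ from its channel output $\boldsymbol{y}_{1}^{(u)}$ in \eqref{eq:rx_sig_UE} over the $T$ channel uses alone. Since the aggregate received SNR is $\doteq P$, Fano's inequality gives $L\le T\log P\,(1+o(1))$, i.e. $\Delta(\mu,\sigma^{2},P)\ge \tfrac{1}{\log P}(1-o(1))$, and hence $\delta^\star(\mu,1)\ge 1$ after the normalization in \eqref{eq:NDT}.

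For the nontrivial term I would use a genie‑aided combined cut‑set argument. Fix such a worst‑case demand $\boldsymbol{d}$, fix $s\in[\min\{M+1,K\}]$ and $\ell\in[\bar{s}:M]$ with $\bar{s}=M+1-s$, choose a set $\mathcal{A}$ of $s$ UEs and a set $\mathcal{B}$ of $\ell$ RNs, and build a virtual receiver that is handed: the outputs $\{\boldsymbol{y}_{k}^{(u)}\}_{k\in\mathcal{A}}$ over $[T]$; the cache contents $\{U_m\}_{m\in\mathcal{B}}$; every noise realization; and, as a genie, the library files requested by no node (and, if convenient, the requested files of the RNs outside $\mathcal{B}$). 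The key structural observation, from \eqref{eq:rx_sig_RN}, is that each RN sees only a scaled, noisy copy of $\boldsymbol{x}^{(s)}$, so its transmit signal is a deterministic causal function of that observation, its cache, and the (known) CSI and demand; hence for $m\in\mathcal{B}$ the signal $\boldsymbol{x}_{m}^{(r)}$ is determined once $\boldsymbol{x}^{(s)}$ is. The $s$ UE outputs in $\mathcal{A}$ then provide $s$ relations whose unknowns are $\boldsymbol{x}^{(s)}$ together with the $M-\ell$ transmit signals of the RNs outside $\mathcal{B}$, i.e. $M+1-\ell$ vector unknowns; the admissibility condition $\ell\ge\bar{s}$, equivalently $s+\ell\ge M+1$, is precisely the dimension‑counting condition that lets the virtual receiver reconstruct $\boldsymbol{x}^{(s)}$ and \emph{all} $M$ RN transmit signals at the DoF scale. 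From these it regenerates every UE channel output and decodes the $K$ UE files, and it regenerates the $\ell$ RN outputs in $\mathcal{B}$ and, with their caches, decodes the $\ell$ corresponding RN files -- $K+\ell$ files in total.

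The counting step then reads $(K+\ell)L\le sT\log P+(\text{cache bits charged})+o(T\log P)+o(L)$: the $s$ outputs $\{\boldsymbol{y}_{k}^{(u)}\}_{k\in\mathcal{A}}$ carry at most $sT\log P\,(1+o(1))$ bits (single‑letterization plus Fano), and the only non‑channel information injected is the cache content of $\mathcal{B}$. Here the restriction to uncoded symmetric caching is essential -- each cache holds at most $\mu L$ bits of each file -- so that, by revealing the needed cache content in a carefully chosen order and conditioning on already‑reconstructed signals and already‑decoded files, the total charged cache content telescopes to $\mu L$ times the coefficient in \eqref{eq:NDT_lw_bound_inner_comp}, which decomposes into the arithmetic sum $\bar{s}\big(K-s+\tfrac{\bar{s}-1}{2}\big)=\sum_{i=0}^{\bar{s}-1}(K-s+i)$ from the ``broadcast'' side of the cut and the triangular sum $\tfrac{\ell}{2}(\ell+1)=\sum_{i=1}^{\ell}i$ from peeling the $\ell$ RN files one at a time. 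Dividing by $L$, letting $P\to\infty$, and invoking \eqref{eq:DTB}--\eqref{eq:NDT} yields $\delta^\star(\mu,1)\ge\delta_{\text{LB}}(\mu,\ell,s)$; maximizing over all admissible $(\ell,s)$ and combining with $\delta^\star(\mu,1)\ge 1$ gives \eqref{eq:NDT_lw_bound}.

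The main obstacle is the cache‑content bookkeeping that produces the \emph{exact} coefficient in \eqref{eq:NDT_lw_bound_inner_comp}. One must choose both the decoding order and the precise portions of the caches revealed to the genie receiver so that conditioning strips away as much already‑determined content as possible, and then check that the resulting nested sum collapses to those two closed forms and nothing larger (the naive count ``$\ell(K+M)\mu L$'' is strictly loose). A secondary, more routine difficulty is making the ``reconstruct $\boldsymbol{x}^{(s)}$ and all RN transmit signals'' step rigorous at the NDT scale: one has to argue genericity of the effective relations for almost all channel realizations and that, once all noise is supplied by the genie, the residual entropy and Fano corrections amount only to $o(\log P)$ per channel use.
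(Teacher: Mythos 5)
Your general skeleton -- a genie-aided cut over $s$ UE outputs and $\ell$ RN caches, causal reconstruction of $\boldsymbol{x}^{(s)}$ and the RN transmit signals via the dimension count $s\geq M+1-\ell$ (i.e.\ $\ell\geq\bar{s}$), then an entropy/Fano count of the form $(K+\ell)L\lesssim sT\log P+(\text{charged cache bits})$ -- is the natural route to a bound of the shape \eqref{eq:NDT_lw_bound_inner_comp}, and the trivial branch $\delta^{\star}(\mu,1)\geq 1$ is handled correctly. Note, however, that this paper does not contain its own proof (it is deferred to the long version cited in the proof environment), so the comparison can only be against what your sketch itself establishes.

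And there the proposal has a genuine gap at exactly the point that constitutes the theorem: the coefficient $\bar{s}\big(K-s+\frac{\bar{s}-1}{2}\big)+\frac{\ell}{2}(\ell+1)$ is never derived; you reverse-engineer it into two sums and assert that a ``carefully chosen order'' of revealing cache contents telescopes to it. Worse, the construction you describe appears to be internally inconsistent with that telescoping. You reveal the caches of all $\ell$ RNs in $\mathcal{B}$ \emph{up front}, because they are needed for the causal reconstruction of the transmit signals; but the conditioning that is supposed to shrink each cache's charge requires files that can only be decoded \emph{after} that reconstruction is complete. At the moment a ``reconstruction'' cache must be made available, the only files known to the virtual receiver are the genie-provided ones and (at best) the $s$ in-cut UE files decodable directly from the given outputs, so each such cache is charged against all $(K-s)+\ell$ still-unknown files, i.e.\ $(K-s+\ell)\mu L$ rather than $(K-s+i)\mu L$. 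Your sketch offers no mechanism (e.g.\ an interleaving of signal reconstruction and file decoding across the block, a different split of which caches serve reconstruction versus RN-file decoding, or an averaging over orderings) that breaks this circularity, so as written the argument only yields a strictly weaker bound than \eqref{eq:NDT_lw_bound_inner_comp}; the second sum $\sum_{j=1}^{\ell}j$ from peeling the RN files is plausible, but the first sum $\sum_{i=0}^{\bar{s}-1}(K-s+i)$ is exactly the part you have not justified. Until that bookkeeping is carried out explicitly, the proposal is a proof outline of a cut-set framework, not a proof of the stated bound.
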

\begin{proof}
	The proof is not provided (see \cite{KakarLongVer}) in this paper due to space limitations. 
	
%
\end{proof}

\subsection{Upper Bound on the NDT}

\begin{figure*}
	\hspace{3cm}
	\centering
	\begin{minipage}[c]{0.98\linewidth}
		\begin{subfigure}[c]{0.95\textwidth}
			\centering
			\begin{tikzpicture}[scale=0.75]
			\PwrLevelPOne
			\end{tikzpicture}
			\caption{First phase -- Integration of MAN multicasting and partial distributed RN ZF beamforming. \alert{Due to partial zero-forcing, the desired symbol for UE$_k$ is received with full power $P$, whereas the remaining undesired (interfering) UE and RN symbols are received at a reduced power level of $P^{1-\alpha}$.}}
			\label{fig:Pwr_Level_P1}
		\end{subfigure}
	\end{minipage}
	\hspace{1cm}
	\begin{minipage}[c]{0.95\linewidth}
		\begin{subfigure}[c]{0.9\textwidth}
			\centering
			\begin{tikzpicture}[scale=0.75]
			\PwrLevelPTwo
			\end{tikzpicture}
			\caption{Second phase -- Private and common signaling with distributed BS-RN ZF beamforming of private symbols. \alert{Similarly to phase one, partial ZF reduces the power level of private symbols by $P^{-\alpha}$.}}
			\label{fig:Pwr_Level_P2}
		\end{subfigure}
	\end{minipage}
	\caption
	{\small Power levels of symbols at BS, RN$_m$ and UE$_k$ for the proposed one-shot scheme in (a) the first phase and (b) the second phase. In the first phase, UE$_k$ decodes its desired symbol of rate $\alpha\log(P)$ by \emph{treating residual interference as noise}. On the other hand, in the second phase, UE$_k$ uses \emph{successive decoding} to first decode the common symbol (desired by UE$_{\bar{k}}$ of rate $(1-\alpha)\log(P)$) and then cancel the common message from its received signal to retrieve its desired private symbol of rate $\alpha\log(P)$.} 
	\label{fig:Pwr_Level}
\end{figure*}
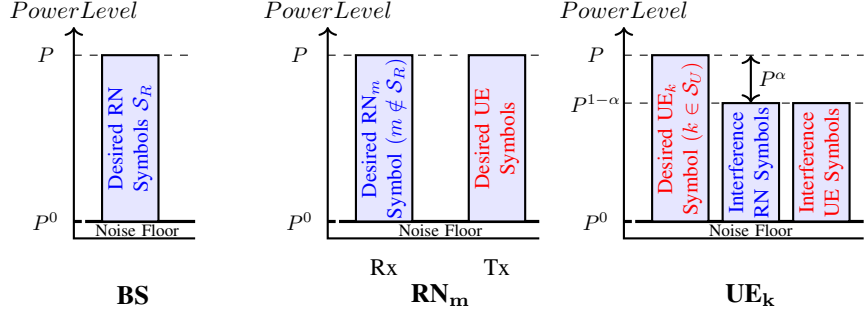
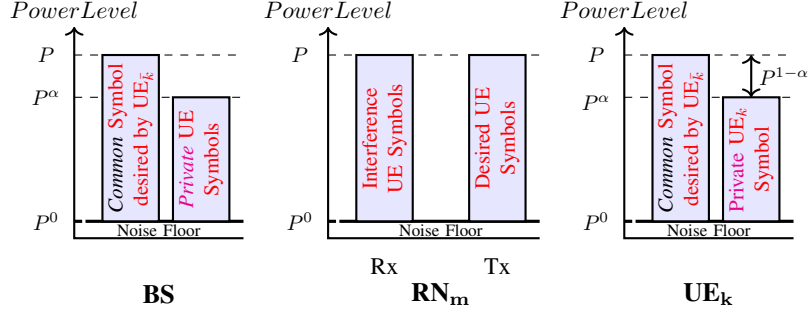
The following theorem specifies the achievable NDT of our proposed one-shot (OS) scheme \alert{which synergistically exploits both multicasting (coded) caching and distributed zero-forcing opportunities under imperfect CSI at BS and all $M$ RNs.} 
\begin{theorem}(Achievable One-Shot NDT)\label{th:one_shot_ach_NDT}
	For $N\geq K+M$ files, $K$ UEs and $M$ RNs each with a cache of (fractional) size $\mu\in\{0,\nicefrac{1}{M},\nicefrac{2}{M},\nicefrac{3}{M},\ldots,\nicefrac{(M-1)}{M},1\}$ and \emph{imperfect} CSI at BS and RN of quality $\alpha$,
		\begin{align}\label{eq:NDT_OS_ach}
			\delta_{\text{OS}}(\mu,\alpha)\triangleq&\max\Bigg\{\delta_{\text{MAN}}(\mu),\nonumber\\
				&\frac{K+\delta_{\text{MAN}}(\mu)\Big(1-\boldsymbol{1}_{K\leq\mu M}(\mu)\alpha\Big)}{1+\min\{K-1,\mu M\}\alpha}\Bigg\}
		\end{align} 
	is achievable, where $\delta_{\text{MAN}}(\mu)=M\cdot(1-\mu)\cdot\frac{1}{1+\mu M}$ is the achievable Maddah-Ali Niesen (MAN) NDT			
	such that $\delta^{\star}(\mu)\leq\delta_{\text{OS}}(\mu)$. For arbitrary $\mu\in[0,1]$, the lower convex envelope of these points is achievable.  
\end{theorem}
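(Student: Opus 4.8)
The plan is to exhibit an explicit achievability scheme realizing Fig.~\ref{fig:OS_Scheme}--\ref{fig:Pwr_Level}: an uncoded symmetric MAN cache placement, a two-phase delivery, and a memory/file-splitting argument for general $\mu$. For the placement I take $t=\mu M\in\{0,1,\ldots,M\}$, split each file into $\binom{M}{t}$ equal subfiles indexed by the $t$-subsets $\mathcal{T}\subseteq[M]$, and let RN$_m$ cache the subfiles with $m\in\mathcal{T}$; this uses exactly $\mu NL$ bits and guarantees that each subfile is known to the BS and to the $t$ relays indexed by $\mathcal{T}$, so the BS together with any such group of relays can jointly (zero-forcing-)precode a symbol carrying that subfile. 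Throughout I take the worst-case demand with $K+M$ distinct requested files; coinciding demands only help.

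First I run, on the interference-free relay links $\boldsymbol{y}_{m}^{(r)}=f_{m}\boldsymbol{x}^{(s)}+\boldsymbol{z}_{m}^{(r)}$, the standard MAN coded-multicast delivery for the $M$ relay demands; since each multicast symbol carries rate $\doteq\log(P)$ and all $f_m\doteq 1$, every relay recovers its missing subfiles, so the relay demands are met after $T_1\doteq\delta_{\text{MAN}}(\mu)\,L/\log(P)$ channel uses, contributing NDT $\delta_{\text{MAN}}(\mu)$. In every such channel use the active relays simultaneously use their caches to serve a set of $\min\{K,\mu M\}$ UEs: through cooperative BS--relay beamforming they partially zero-force, at those UEs, both the (there undesired) MAN multicast and the other targeted UEs' symbols, while delivering to each of them $\doteq\alpha\log(P)$ bits of its requested file, and the served set is cycled over time for symmetry. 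Because the relays hold only estimates $\hat{\boldsymbol{g}},\hat{\boldsymbol{H}}$ of quality $\alpha$, the zero-forcing is \emph{partial}: each targeted UE receives its symbol at power $\doteq P$ while the residual interference is at $\doteq P^{1-\alpha}$, so (treating it as noise, cf. Fig.~\ref{fig:Pwr_Level_P1}) each served UE decodes $\doteq\alpha\log(P)$ bits per channel use, giving a phase-one total of $\min\{K,\mu M\}\,\alpha\,\delta_{\text{MAN}}(\mu)\,L$ UE bits split evenly over the $K$ UEs. The causal dependence of the relay cancellation on the BS's multicast I would handle by a standard pipelined/block-Markov structure whose overhead is negligible in the NDT.

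In the second phase the relays stop decoding and I activate $\min\{K-1,\mu M\}$ of them so that, with the BS, they act as a $(1+\min\{K-1,\mu M\})$-antenna MISO transmitter carrying the still-undelivered UE bits by rate-splitting (cf. Fig.~\ref{fig:Pwr_Level_P2}): a common symbol at rate $\doteq(1-\alpha)\log(P)$, decodable at every UE by treating everything at power $P^{\alpha}$ and below as noise, plus one zero-forcing-precoded private symbol per UE at rate $\doteq\alpha\log(P)$ (residual interference at $\doteq P^{1-\alpha}$), decoded after successively cancelling the common symbol. This gives per-channel-use sum rate $\doteq(1+\min\{K-1,\mu M\}\alpha)\log(P)$, hence a phase-two NDT of $\delta_2=\big(K-\min\{K,\mu M\}\,\alpha\,\delta_{\text{MAN}}(\mu)\big)\big/\big(1+\min\{K-1,\mu M\}\alpha\big)$. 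Adding $\delta_{\text{MAN}}(\mu)+\delta_2$ and simplifying, distinguishing $K\le\mu M$ (then $\min\{K,\mu M\}=K$, $\min\{K-1,\mu M\}=K-1$, and the numerator correction factor becomes $1-\alpha$) from $K>\mu M$ (then both minima equal $\mu M$ and the factor becomes $1$), reproduces exactly the second term inside the maximum of \eqref{eq:NDT_OS_ach}; since one can always just run phase one (skipping phase two when it would have non-positive length), the achievable NDT is the maximum of that term with $\delta_{\text{MAN}}(\mu)$, i.e. $\delta_{\text{OS}}(\mu,\alpha)$, so $\delta^{\star}(\mu,\alpha)\le\delta_{\text{OS}}(\mu,\alpha)$.

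For arbitrary $\mu\in[0,1]$ I would pick neighbouring admissible corner points $\mu_1,\mu_2$ with $\mu=\beta\mu_1+(1-\beta)\mu_2$, split each file into parts of fractions $\beta$ and $1-\beta$, and run the scheme above independently on the two parts with caches of sizes $\beta\mu_1NL$ and $(1-\beta)\mu_2NL$; since the NDT scales linearly with the per-part file length, the result is $\beta\delta_{\text{OS}}(\mu_1,\alpha)+(1-\beta)\delta_{\text{OS}}(\mu_2,\alpha)$, and sweeping $\beta$ traces the lower convex envelope. The main obstacle is the high-SNR rate analysis under imperfect CSI: one must verify rigorously that the cooperative \emph{partial} zero-forcing in both phases leaves residual interference at power exponent exactly $1-\alpha$ (so the claimed $\alpha\log(P)$ and $(1-\alpha)\log(P)$ rates are $\doteq$-achievable with vanishing error as $L\to\infty$), that the phase-two common symbol is simultaneously decodable at all $K$ UEs, and that the two phases interlock without loss---the causal relay-side cancellation, the worst-case demand patterns, and the $\limsup_L$ and $\lim_P$ limits entering the DTB/NDT. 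The remaining combinatorial bookkeeping over MAN subfiles and activated relay subsets is routine but has to be checked to confirm the joint-knowledge property underlying the precoders.
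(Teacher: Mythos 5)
Your construction is essentially the paper's own scheme: MAN placement at the relays, a first phase that multicasts to subsets of $1+\mu M$ RNs while simultaneously delivering $\alpha\log(P)$ bits per channel use to $\min\{K,\mu M\}$ UEs via partial distributed zero-forcing, a second phase based on common/private rate-splitting with sum rate $\big(1+\min\{K-1,\mu M\}\alpha\big)\log(P)$, and memory sharing for general $\mu$; your bookkeeping ($T_1/L'=\delta_{\text{MAN}}(\mu)$, $\tilde{L}$, $T_2$, the two cases $K\le\mu M$ and $K>\mu M$) reproduces \eqref{eq:NDT_OS_ach} exactly, and the "skip phase two if its length is non-positive" remark correctly yields the outer $\max$.

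The one genuine weak point is how you handle the nulling of the BS's MAN multicast at the served UEs. You treat it as a causality problem (the relays would first have to know the BS's transmitted symbol) and propose a pipelined/block-Markov fix, but that fix does not work: interference must be cancelled \emph{in the channel use in which it is received}, and knowledge of the multicast symbol one block later cannot remove the full-power $\doteq P$ BS signal from the current received signal, so the served UEs would not see residual interference at $\doteq P^{1-\alpha}$ and the claimed $\alpha\log(P)$ bits per channel use in phase one would not be attained. The scheme needs no delayed processing at all: each constituent of the XOR symbol intended for $\mathcal{S}_R$ (the subfile of $W_{d_m}$ indexed by $\mathcal{S}_R\setminus\{m\}$) is \emph{already cached} at $\mu M$ relays, and $\mu M\geq\min\{K,\mu M\}$, so those relays can pre-cancel that constituent component-wise by distributed beamforming at the $\min\{K,\mu M\}$ served UEs concurrently with the BS transmission (this is visible in Fig.~\ref{fig:OS_Phase1}, where the "ZF Interference" relays belong to $\mathcal{S}_R$); this cache-enabled pre-cancellation is precisely the role of the transceiver caches in phase one and is the idea your write-up is missing. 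A second, minor omission is the finer subpacketization with the factor $\Gamma=\binom{K}{\psi}$, $\psi=\min\{K,\mu M\}$, which the paper uses so that the served UE subsets can be cycled symmetrically and every UE ends phase one with exactly $\tilde{L}$ symbols; your "cycle the served set" remark needs this to be made exact, but that is routine once the pre-cancellation structure is in place.
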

\begin{proof}
	Herein, we give a sketch of the scheme. We assume that $M$ RNs and $K$ UEs request all distinct files. Recall that the file length is denoted by $L$. Assuming Gaussian signaling, the file is comprised of $L'=L/\log(P)$\footnote{We set $L'$ to $\Gamma{{M}\choose{\mu M}}$ symbols, where $\Gamma={{K}\choose{\psi}}$ and $\psi=\min\{K,\mu M\}$.} symbols. The scheme we develop consists of two phases requiring $T_1$ and $T_2$ channel uses (cf. Fig. \ref{fig:OS_Scheme}), respectively, to send uncached $(1-\mu)L'$ Gaussian symbols (each symbol carrying approximately $\log(P)$ bits) to each RN and also $L'$ symbols to each UE. 
	
	The RNs prefetch their caches at fractional cache sizes $\mu\in\{\nicefrac{1}{M},\nicefrac{2}{M},\ldots,1\}$\footnote{At $\mu=0$, irrespective of $\alpha$, the optimal NDT is $K+M$ achievable by unicasting desired symbols to $K$ UEs and $M$ RNs.} as follows. Any combinations of $\mu M$ RNs\footnote{In total, there are ${{M}\choose{\mu M}}$ combinations} share $\Gamma={{K}\choose{\psi}}$ symbols (per file) each of rate $\log(P)$. In consequence, RN$_m$ caches a total of $\Gamma{{M-1}\choose{\mu M-1}}$ symbols per file which constitutes a fractional cache size of $\mu$. 
	  
	In every channel use of the \emph{first phase} depicted in Fig. \ref{fig:OS_Phase1}, beamforming facilitates the integration of the MAN scheme \cite{Maddah-Ali2} with zero-forcing beamforming to (i) \emph{partially} cancel interference caused by applying the MAN scheme on the BS-RN broadcast channel at the UEs and (ii) convey \alert{$\alpha\log(P)$ bits of the desired file to each of the UEs}. Precisely, the MAN scheme is applied on the BS-RN broadcast channel to provide each RN in a subset $\mathcal{S}_R\subset[M]$ with $|\mathcal{S}_R|=1+\mu M$ RNs with a desired symbol (each of rate $\log(P)$). \emph{Simultaneously}, the full-duplex capabilities at the RNs are exploited by conveying to each UE in the subset $\mathcal{S}_U\subset[K]$ (with $|\mathcal{S}_U|=\min\{K,\mu M\}$ UEs in total) with \alert{$\alpha\log(P)$ bits of its desired file by lowering the residual interference} due to all interfering symbols that $|\mathcal{S}_R|=1+\mu M$ RNs in $\mathcal{S}_R$ desire (cf. Fig. \ref{fig:Pwr_Level_P1} at UE$_k$). Recall that the first phase consumes $T_1$ channel uses. We can show that $T_1=\Gamma{{M}\choose{1+\mu M}}$ suffice in sending each RN$_m$, $\forall m\in[M]$, the remaining $(1-\mu)L'$ symbols of its requested file. Simultaneously, in $T_1$ channel uses each UE$_k$, $\forall k\in[K]$, receives $\tilde{L}={{M}\choose{1+\mu M}}{{K-1}\choose{\psi-1}}\alpha$ symbols of its desired file
	Thus, we may encounter cases where it is either feasible or infeasible to communicate all $L'$ symbols of each requested file to the respective UEs in $T_1$ channel uses ($\tilde{L}\geq L'$ or $\tilde{L}<L'$).   
	Only in the case of missing symbols ($\tilde{L}<L'$) that all $K$ UEs still require after $T_1$ channel uses, additional $T_2>0$ channel uses are required in phase two to deliver the remaining desired symbols as shown in Fig. \ref{fig:OS_Phase2}. To this end, in every channel use private and common signaling in conjuction with \emph{cooperative BS-RN zero-forcing beamforming} of private symbols is deployed to send (i) private symbols (of rate $\alpha\log(P)$) in total to $\psi'=\min\{K,1+\mu M\}$ UEs (ii) and a common symbol (of rate $(1-\alpha)\log(P)$) desired by a single UE (say UE$_{\bar{k}}$ as illustrated in Fig. \ref{fig:Pwr_Level_P2}). 
	In consequence, phase two spans $T_2=\frac{K(L'-\tilde{L})}{1+\min\{K-1,\mu M\}\alpha}$ channel uses. In conclusion, the achievable NDT becomes either $\frac{T_1}{L'}=\delta_{\text{MAN}}(\mu)$ if $T_2=0$ or $\frac{T_1+T_2}{L'}=\frac{K+\delta_{\text{MAN}}(\mu)(1-\boldsymbol{1}_{K\leq\mu M}(\mu)\alpha)}{1+\min\{K-1,\mu M\}\alpha}$ if $T_2>0$ with $L'$ being the number of symbols per file.  
\end{proof}

\section{Discussion of Special Cases}

In this section, we investigate the NDT-optimality of special cases of the parameters $\mu$ and $\alpha$. 

\subsection{Zero-Cache ($\mu=0$)}

In the special case of zero-cache, it is easy to see by comparing the lower bound \eqref{eq:NDT_lw_bound} (for $(\ell,s)=(M,1)$) with the upper bound \eqref{eq:NDT_OS_ach} that the CSI quality does not play any role and the optimal NDT becomes
\begin{equation}
	\delta^{\star}(0,\alpha)=K+M.
\end{equation}   

\subsection{Full-Cache ($\mu=1$)}

On the other hand, in case of full RN caches, the BRC reduces to a MISO BC with $M+1$ antennas and $K$ UEs. The achievable one-shot NDT for this setting corresponds to $\delta_{\text{OS}}(1,\alpha)=\frac{K}{1+\min\{K-1,M\}\alpha}$,
whereas the lower bound \eqref{eq:NDT_lw_bound} (for $(\ell,s)=(0,M+1)$) gives $\delta^{\star}(1,1)\geq\max\Big\{1,\frac{K}{M+1}\Big\}$.
Further, a degrees-of-freedom upper bound on MISO BCs under imperfect CSIT for $M+1$ antennas and $K$ UEs 
suggests the following NDT lower bound \alert{for arbitrary $\alpha$ $\delta^{\star}(1,\alpha)\geq\frac{K}{1+\max\{K-1,M\}\alpha}$ \cite{Davoodi16}.}
In conclusion, the proposed one-shot is optimal (i) for $\alpha=0$ at $\mu=1$ for any $K,M$ and (ii) for arbitrary $\alpha$ at fractional cache size $\mu=1$ when $K=M+1$. 

\subsection{Perfect CSI ($\alpha=1$)}

The delivery time of the proposed one-shot scheme is devoted to both RNs and UEs. There are cases where either the BS-RN BC or the BS-RN to UE IC represent the limitation with respect to the delivery time. It is of interest to determine when which limitation happens as a function of $\mu, K$ and $M$. Specifically, when neglecting the discretization of the fractional cache size $\mu$ to values $\{0,\nicefrac{1}{M},\nicefrac{2}{M},\ldots,1\}$, Table \ref{tab:def_reg_ach_NDT} specifies how the one-shot NDT expression \eqref{eq:NDT_OS_ach} simplifies for the given region triplets $(\mu,K,M)$. The regions of Table \ref{tab:def_reg_ach_NDT} are illustrated in Fig. \ref{fig:Region_plot_advanced} for constant $K$ ($K=2$). All discrete points inside Region A lead to the optimal NDT of $\delta^{\star}(\mu,1)=1$. Further, when $M\geq 2K + 1$, we see that for $\mu\geq\nicefrac{1}{M}$, the achievable one-shot
NDT \alert{in regions B and E} does not depend on $K$, i.e., the number of UEs. Instead, the NDT is solely dependent on the number of RNs $M$ and given by the \emph{broadcast-limited} NDT $\delta_{\text{MAN}}(\mu)$. In contrast, for $M\leq 2K$, the IC represents the limitation and the \emph{interference-limited} NDT of $\frac{K+\delta_{\text{MAN}}(\mu)}{1+\mu M}$ is attained.   
\begin{table}
	\begin{center}
		\begin{tabular}{|l|l|l|}
			\hline
			\multirow{1}{*}{Region Name} & \multirow{1}{*}{Definition} & \multirow{1}{*}{Achievable NDT} \\ \hline  \multirow{2}{*}{Region A} & $K\leq\mu M<M<\frac{1}{1-2\mu},\mu\leq\frac{1}{2}$ & \multirow{2}{*}{$1$} \\\cline{2-2}  & $K\leq\mu M\leq M,M>\frac{1}{1-2\mu},\mu>\frac{1}{2}$ & \\ \hline
			Region B & $K\leq\mu M,\frac{1}{1-2\mu}\leq M,\mu\leq\frac{1}{2}$ & \multirow{2}{*}{$\delta_{\text{MAN}}(\mu)$} \\ \cline{1-2} Region E & $\mu M <K\leq\mu M\cdot\delta_{\text{MAN}}(\mu)\leq M$ & \\ \hline
			Region C & $\mu M<M<K$ & \multirow{2}{*}{$	\frac{K+\delta_{\text{MAN}}(\mu)}{1+\mu M}$} \\ \cline{1-2} Region D & $\mu M\cdot\max\Big\{1,\delta_{\text{MAN}}(\mu)\Big\}<K\leq M$ & \\ \hline
		\end{tabular}
		\captionof{table}{\small Definition of $(\mu,K,M)$ region triplets for $\alpha=1$ and their achievable one-shot NDT. The achievable one-shot NDT in Region A coincides with the lower bound and is thus NDT-optimal.}
		\label{tab:def_reg_ach_NDT}
	\end{center}
\end{table}
\begin{figure}
	\centering
	\begin{tikzpicture}[scale=0.925]
	\PlotMMuRegionsTwo
	\end{tikzpicture}
	\caption[Plot of all regions]{\small 2D $(\mu,M)$-plot of all Regions A, B, C, D and E for constant $K$ $(K=2)$. The labels on the graph indicate the functional relationship at the borders of neighboring regions. The discrete points illustrate the fractional cache sizes $\mu\in\Big\{0,\frac{1}{M},\ldots,\frac{M-1}{M},1\Big\}$ for which the achievable one-shot NDT expression $\delta_{\text{OS}}(\mu)$ in \eqref{eq:NDT_OS_ach} actually hold. The annotations to the regions specify the main characteristics of the respective region. The channel limitations specify which channel -- broadcast or interference channel -- is characteristic for the delivery time overhead. The RN standalone frontier, where $\mu M=K$ holds, represents scenarios for which all $K$ UEs can be served by any subset of $\mu M$ RNs \emph{without} the need of the BS.}
	\label{fig:Region_plot_advanced}
\end{figure}
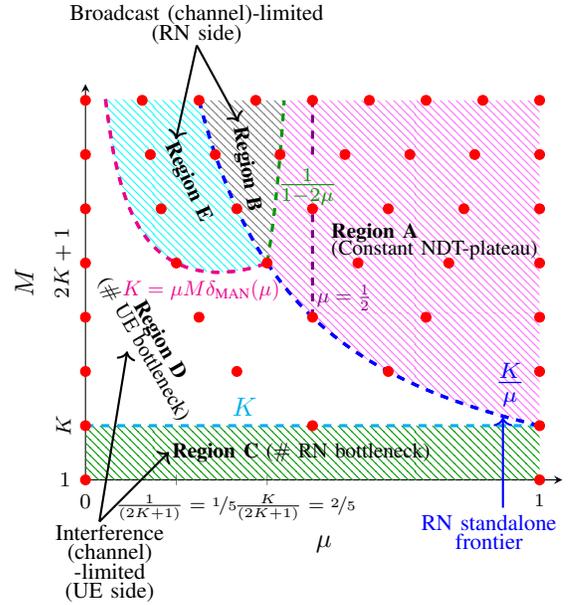 

%

\bibliographystyle{IEEEtran}
\bibliography{bibliography}
\balance
 
\end{document}